\newtheorem{asu}{Assumption}
\newtheorem{lem}{Lemma}
\newtheorem{proposition}{Proposition}
\newcommand{\MATLAB}{\textsc{Matlab}\xspace}
\newcommand{\RN}[1]{%
	\textup{\uppercase\expandafter{\romannumeral#1}}%
}
\def\BibTeX{{\rm B\kern-.05em{\sc i\kern-.025em b}\kern-.08em
    T\kern-.1667em\lower.7ex\hbox{E}\kern-.125emX}}
\begin{document}

\title{On Distributed and Asynchronous Sampling of Gaussian Processes for Sequential Binary Hypothesis Testing}

\author{Nandan Sriranga,~\IEEEmembership{Student Member,~IEEE}, Saikiran Bulusu,~\IEEEmembership{Member,~IEEE}, Baocheng Geng,~\IEEEmembership{Member,~IEEE}, Pramod K. Varshney,~\IEEEmembership{Life Fellow,~IEEE}
\thanks{Nandan Sriranga, Saikiran Bulusu, and Pramod K. Varshney are with the Department of Electrical Engineering and Computer Science, Syracuse University, Syracuse, NY 13244 USA (email:\{nsrirang, sbulusu, varshney\}@syr.edu) }
\thanks{Baocheng Geng is with the Department of Computer Science, University of Alabama at Birmingham, Birmingham, AL 35294 USA (email:\{bgeng\}@uab.edu) }}%

\maketitle

\begin{abstract}
In this work, we consider a binary sequential hypothesis testing problem with distributed and asynchronous measurements. The aim is to analyze the effect of sampling times of jointly \textit{wide-sense stationary} (WSS) Gaussian observation processes at distributed sensors on the expected stopping time of the sequential test at the fusion center (FC). The distributed system is such that the sensors and the FC sample observations periodically, where the sampling times are not necessarily synchronous, i.e., the sampling times at different sensors and the FC may be different from each other. \color{black} The sampling times, however, are restricted to be within a time window and a sample obtained within the window is assumed to be \textit{uncorrelated} with samples outside the window.  We also assume that correlations may exist only between the observations sampled at the FC and those at the sensors in a pairwise manner (sensor pairs not including the FC have independent observations). The effect of \textit{asynchronous} sampling on the SPRT performance is analyzed by obtaining bounds for the expected stopping time. We illustrate the validity of the theoretical results with numerical results.   
\end{abstract}

\begin{IEEEkeywords}
Sampling, Sequential Detection, Expected Stopping Time, Wald's SPRT, Distributed Detection
\end{IEEEkeywords}

\section{Introduction}
\IEEEPARstart{S}{ensors} are battery-powered and in many applications operate remotely, and, therefore, energy efficiency is a critical concern in wireless sensor networks (WSNs) \cite{sadler2005fundamentals, appadwedula2005energy}. In such scenarios, minimizing the number of communication exchanges is a crucial design goal. Sequential tests are especially advantageous as they minimize the number of communication exchanges, thereby reducing energy consumption in the system. 
When continuous-time signals are observed at sensors, it is impractical to transmit the signal values at each instant of time. Therefore, the signals are sampled at specific instants of time and are transmitted to the Fusion Center (FC). Efficient sampling at sensors can improve the quality of information with respect to the decision-making problem and also contribute to the reduction in the expected time required for sequential tests.

The notion of sampling signals in distributed systems observing random states that are evolving with time has been considered for state-estimation in target tracking systems by using staggered sensing in \cite{niu2005temporally, liu2012temporally}, and for discrete-time dynamical systems in \cite{liu2014optimal}. Our work is partly motivated by the problems studied in \cite{niu2009sampling} and \cite{yu1997sampling}, both of which consider the effect of sampling a single continuous-time wide-sense stationary (WSS) Gaussian process on the detection performance. The work in \cite{yu1997sampling}, considers non-uniform sequential sampling for a WSS continuous process to improve the detection performance of a binary hypothesis testing problem, and in \cite{niu2009sampling}, the authors propose several sampling schemes for a sequential binary hypothesis testing problem in order to account for the temporal dependence of the observations of a WSS process. Both of these works address the effect of correlation between observations of the same process at different points in time on the detection performance.
 
The sampling of continuous processes for sequential hypothesis testing has also been studied in the distributed sensor network context in  \cite{fellouris2011decentralized} where, a \textit{level-triggered} sampling scheme is proposed to obtain discrete-time samples at sensors that allows for 1-bit data transmissions. Similarly, the authors consider the level-triggered sampling scheme for a distributed sequential spectrum sensing problem in \cite{yilmaz2012cooperative}.  Moreover, distributed sequential hypothesis testing has been studied for sensors observing spatially dependent observations, where the joint probability distribution of sensor observation is modeled using copulas in \cite{zhang2019distributed} and a window of observations is used to perform a truncated sequential test.

In this work, we study the effect of pairwise temporal correlations that may exist between sensor observations and the FC, on sequential hypothesis testing, using the sequential probability ratio test (SPRT), when FC receives sampled observations from the sensors. We determine the effect of the temporal correlations on the expected stopping time of the test and characterize the performance with respect to a fixed sampling time at the FC. Our work focusses on deriving the upper and lower bounds of the stopping times, when the sensor observations are correlated with the FC's observations in a pairwise manner but are uncorrelated with each other, within a finite-duration sampling period. 

\section{Background and problem formulation}
In this section, we motivate the problem and describe the system model, the sampling scheme, and the performance of the sequential test in terms of the expected stopping time of the test.

\subsection{Sampling Scheme}
The sampling scheme is based on the idea of the group sampling scheme in \cite{niu2009sampling}, where groups of samples are formed from a collection of samples at a sensor obtained over time (temporal sampling). In this work, the groups of samples are formed by considering samples across sensors, where each sensor contributes one sample to the group (spatial sampling). 

 We consider the binary sequential hypothesis testing problem in which there are $N$ sensors and an FC, where the $j^{th}$ sensor observes the sample $x_j(t_j+kT)$ where $k \in \mathbb{Z}^{+}$ and $T>0$. For a given $k$, the observations $x_{j}(t_j+kT)$ are observed in a sampling window $kT \leq t_j + kT \leq kT+\epsilon$, for $j \in [N]$. The FC samples the observations $x_{fc}(t_{fc}+kT)$ in the sampling window $kT \leq t_{fc} + kT \leq kT+\epsilon$. This allows us to form a \textit{group} of samples, where the samples corresponding to the $k^{th}$ group are assumed to be \textit{independent} of the samples belonging to all other groups. 
  The SPRT is performed once the sensor samples observed at times $t_j + kT$, $j\in [N]$, arrive at the FC within the interval $[kT, kT + \epsilon]$. 
  The sampling scheme in the parallel network of sensors is illustrated in Fig. \ref{fig:sampling}, where each sensor samples during a sampling window of duration of $\epsilon$, and transmits the sample to the fusion center. 

Due to the assumed WSS property of the observations, the pairwise correlations decrease with time. Therefore, any pair of samples that are widely separated in time, i.e., large T, are assumed to be independent (residual correlations are not considered). \color{black}We consider processes that exhibit \textit{short-lived} temporal correlations termed as \textit{a-dependent} processes \cite{edition2002probability}, where samples that are spaced sufficiently far apart in time from each other, are uncorrelated. 

\begin{figure}
    \centering
    \includegraphics[width=80mm]{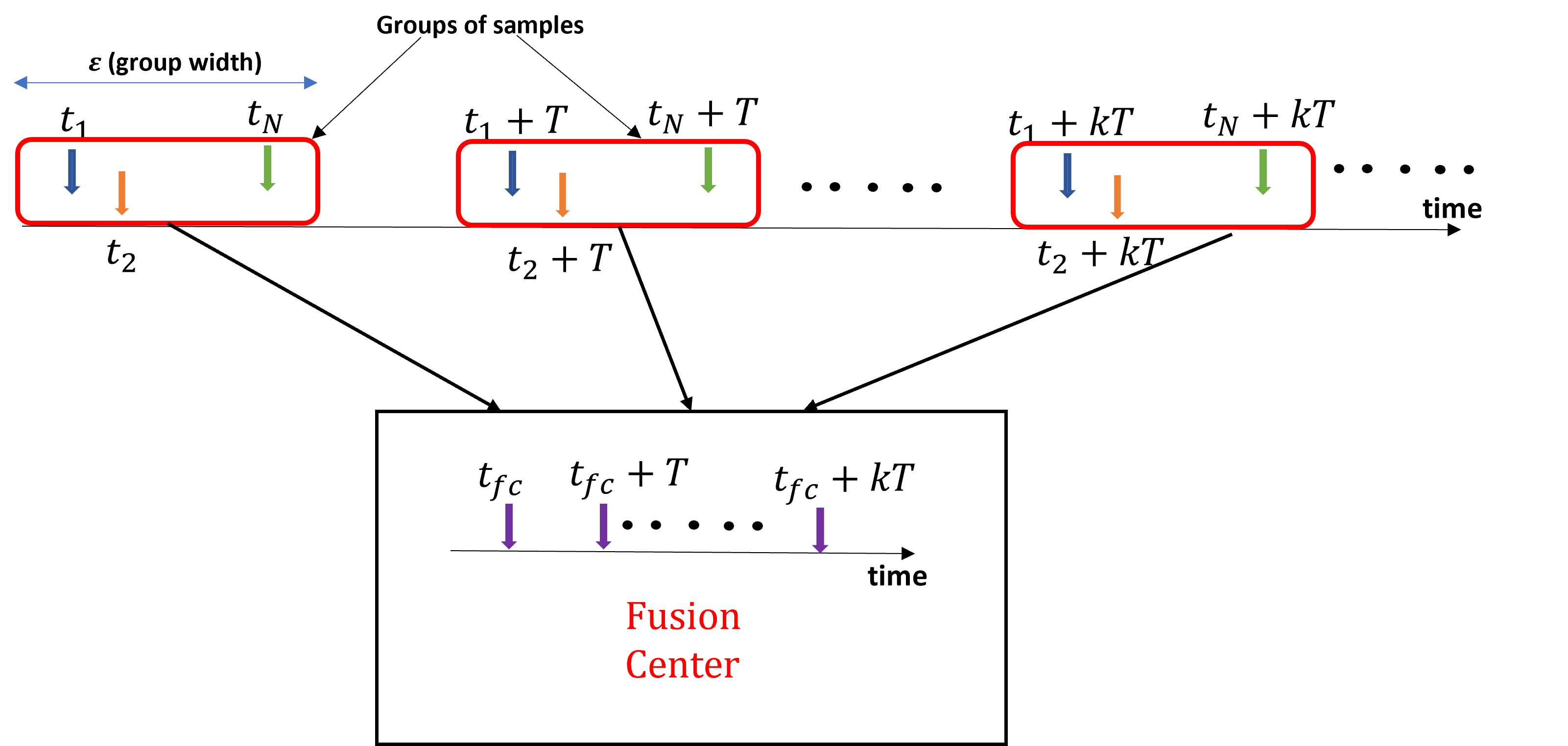}
    \caption{The sampling scheme in this work. The colored arrows indicate the individual samples from different sensors and the red box indicates a group of samples obtained within a sampling window of length $\epsilon$. All sampling times are such that $0< t_1, t_2, \dots, t_{fc} < \epsilon$.} 
    \label{fig:sampling}
\end{figure}

\subsection{Observation Model}

We denote the vector of all the samples acquired during the $k^{th}$ sampling period by $\mathbf{x}(\mbox{\boldmath$t$},k) = [x_1(t_1 + kT) \thinspace \thinspace  x_2(t_2+kT) \thinspace \thinspace \dots \thinspace \thinspace x_N(t_N+kT)]^{\text{T}} \in \mathbb{R}^{N}$, where $\mbox{\boldmath$t$} = [t_1+kT \thinspace \thinspace t_2+kT \thinspace \thinspace \dots \thinspace \thinspace t_N+kT]^{\text{T}}$. The observation model is given as 
\begin{align}
    &H_1: {x}_j(t_j+kT) = {s}_j + \thinspace {n}_j(t_j+kT),  \quad \text{and}\nonumber\\
    &H_0: {x}_j(t_j+kT) = {n}_j(t_j+kT),
    \label{eq:obs_model_sensors}
\end{align}
for $j=1, \dots, N$ and $k \in \mathbb{Z}^{+}$.
We assume that the signals $s_j$ are constant in time and deterministic for $j\in [N]$ and denote the vector of all signals under $H_1$ by $\mathbf{s} = [s_1, s_2, \dots, s_N]^{\text{T}} \in \mathbb{R}^{N}$. In this work, it is assumed that the FC also observes the PoI by sampling the observations at sampling time $t_{fc}+kT$. The observation model at the FC is
\begin{align}
    &H_1: {x}_{fc}(t_{fc}+kT) = {s}_{fc} + \thinspace {n}_{fc}(t_{fc}+kT),  \quad \text{and}\nonumber\\
    &H_0: {x}_{fc}(t_{fc}+kT) = {n}_{fc}(t_{fc}+kT).
    \label{eq:obs_model_FC}
\end{align}
We assume that the noise processes $n_j(t_j+kT)$ and the noise process at the FC $n_{fc}(t_{fc}+kT)$, are jointly WSS. The description of the exact nature of the noise process $\mathbf{n}(\mbox{\boldmath$t$},k)$ considered in this work along with examples to illustrate the noise covariance structure is postponed to a future sub-section.


\subsection{SPRT and Expected Stopping Time}

For the binary hypothesis testing problem with observations $\tilde{x}(\mathbf{t}, 1), \dots, \tilde{x}(\mathbf{t}, k)$ which are independent and identically distributed in time, the SPRT statistic at the $k^{th}$ iteration of the test is the sum of the log-likelihood ratios (LLRs) $\sum_{i=1}^{k} \Lambda(\mathbf{t}, i)$, where $\Lambda(\mathbf{t}, i) = \text{log} \thinspace \frac{f(\tilde{x}(\mathbf{t}, i) | H_1)}{f(\tilde{x}(\mathbf{t}, i) | H_0)}$. The SPRT at stage ${k}$, which corresponds to the $k^{th}$ sampling period is given by
$$
\sum_{i=1}^{k} \Lambda(\mathbf{t}, i) \quad
\begin{cases}
> \Delta_1 , \thinspace \text{decide} \thinspace \thinspace \mathcal{H}_1\\
< \Delta_0 , \thinspace  \text{decide} \thinspace \thinspace \mathcal{H}_0\\
\textit{otherwise}, \thinspace  \text{continue sampling}.
\end{cases}
$$
The log-likelihood ratio for the observation model in II-C is given as 
\begin{equation}
    \Lambda(\mathbf{t}, i) = \tilde{\mathbf{x}}^{\text{T}}(\mathbf{t}, i) \mathbf{\Sigma}^{-1}(\mathbf{t}) \mathbf{\tilde{s}} - \frac{1}{2}\mathbf{\tilde{s}}^{\text{T}} \mathbf{\Sigma}^{-1}(\mathbf{t}) \mathbf{\tilde{s}}
    \label{LLR}
\end{equation}
where $\tilde{\mathbf{x}}^{\text{T}}(\mathbf{t}, i) = [{\mathbf{x}}^{\text{T}}(\mathbf{t}, i) \quad x_{fc}(t_{fc})]$, $\tilde{\mathbf{s}}^{\text{T}} = [{\mathbf{s}}^{\text{T}} \quad s_{fc}]$ and, $\mathbf{\Sigma}^{-1}(\mathbf{t})$ is the inverse covariance matrix of the random variables $x_1(t_1), x_2(t_2), \dots, x_N(t_N) \thinspace \text{and} \thinspace x_{fc}(t_{fc})$. 

The values of the thresholds $\Delta_1$ and $\Delta_0$, depend on the desired levels of probability of false alarm $P_{FA}$ and probability of detection $P_D$ according to Wald's approximations \cite{abraham1947sequential} as 
\begin{align}
    & \Delta_1 \approx \text{log} \thinspace \bigg ( \frac{P_D}{P_{FA}} \bigg), \quad \text{and} \\
    & \Delta_0 \approx \text{log} \thinspace \bigg (\frac{1-P_D}{1-P_{FA}} \bigg).
\end{align}

For pre-specified values of $P_{FA}$ and $P_D$ that are to be achieved, the expected stopping times of the sequential test can be obtained by assuming that the overshoots of the test-statistic when the test halts by crossing the thresholds $\Delta_0$ and $\Delta_1$, are negligible \cite{tartakovsky2014sequential}. Then, the expected stopping times are
\begin{equation}
  \quad  \mathbb{E}[S^{*}(\mathbf{t})|\mathcal{H}_1] \approx \frac{P_{D} \cdot \Delta_1 + (1-P_{D}) \cdot \Delta_0}{\mathbb{E}[\Lambda(\mathbf{t}, i)|\mathcal{H}_1]}, \quad \text{and}
    \label{eq:exp_stop_time_H1}
\end{equation}
\begin{equation}
        \mathbb{E}[S^{*}(\mathbf{t})|\mathcal{H}_0] \approx \frac{P_{FA} \cdot \Delta_1+ (1-P_{FA}) \cdot \Delta_0}{\mathbb{E}[\Lambda(\mathbf{t}, i)|\mathcal{H}_0]}, 
    \label{eq:exp_stop_time_H0}
\end{equation}

where $S^*(\mathbf{t})$ is the stopping time of the sequential test at the FC, which is a function of the sampling times $\mathbf{t}$. Since $\mathbf{x}(\mathbf{t}, {k})$ is a Gaussian random process with the same covariance matrices under both hypotheses, it can be seen that
\begin{equation}
    \mathbb{E}[\Lambda(\mathbf{t}, i)|\mathcal{H}_1] = -\mathbb{E}[\Lambda(\mathbf{t}, i)|\mathcal{H}_0] = \frac{1}{2} \mathbf{\tilde{s}}^{\text{T}} \mathbf{\Sigma}^{-1} (\mathbf{t}) \thinspace \mathbf{\tilde{s}},
    \label{eq:exp_LLR_KLD}
\end{equation}
where $\mathbf{t} = [t_1, t_2, \dots, t_N]$ are the sampling times at each of the sensors relative to the period at which the samples are obtained. It can be seen that the stopping times in (\ref{eq:exp_stop_time_H1}) and (\ref{eq:exp_stop_time_H0}) are inversely proportional to the respective expected values of LLRs in (\ref{eq:exp_LLR_KLD}), which is also the KL divergence (KLD) between the probability density functions of the observations under $\mathcal{H}_1$ and $\mathcal{H}_0$.

\subsection{Covariance Matrix and Correlation Functions }

The noise $\mathbf{n}(\mathbf{t}, k)$ being a Gaussian process can be characterized by a valid (positive definite) covariance matrix. Further, if we define pairwise correlation functions as $r_{j}(t_{fc}, t_j)$ which are a function of the sampling times, it must be ensured that $\mathbf{\Sigma}(\mathbf{t})$ remains positive definite ($\mathbf{\Sigma}(\mathbf{t}) \succ 0$ denotes that $\mathbf{\Sigma}(\mathbf{t})$ is positive definite) for all values of $t_1, t_2, \dots, t_N$. 



We assume that the pairwise correlations between observations $x_{fc}(t_{fc})$ and $x_{j}(t_j)$ are
\begin{equation}
    r_{j}(t_{fc}, t_j) = \rho_{j} f_{j}(t_{fc}, t_j).
\end{equation}

Since the sensor observations $\mathbf{x}_j(\cdot), \thinspace \forall j$ and $x_{fc}(\cdot)$ are jointly WSS processes, we have $f_{j}(t_{fc}, t_j) = f_{j}(|t_{fc} - t_j|)$, with $0<f_{j}(|t_{fc}-t_j| \leq 1$. 

\begin{asu}
The correlation functions $f_j(\cdot)$ are \textit{decreasing} functions of $|t_{fc}-t_j|$ with $f_j(0) = 1$.  
\label{as:dec_corr_fns}
\end{asu}
Some examples of such correlation functions are $e^{-|t_{fc}-t_j|}$ and $e^{-(t_{fc}-t_j)^2}$, which are all decreasing functions of $|t_{fc}-t_j|$ . 

The covariance matrix corresponding to the correlation structure considered in this work can be written as

\begin{equation}
\mathbf{\Sigma}(t) = \sigma^2
\begin{bmatrix}
\mathbf{I}_{N}  & \mathbf{r}_{fc}(\mathbf{t}) \\[0.5em]
\mathbf{r}^{T}_{fc}(\mathbf{t}) & 1
\end{bmatrix},
\label{eq:cov_structure_mat}
\end{equation}
where $\mathbf{I}_{N} \in \mathbb{R}^{N \times N}$ is the identity matrix and $\mathbf{r}_{fc}{(\mathbf{t})} = [r_1(\mathbf{t}) \dots r_{N}(\mathbf{t})]^T \in \mathbb{R}^{1 \times N}$, and the pair-wise correlations between each of the sensor observations and the observation at the FC can be represented as $r_j(\mathbf{t}) = \rho_j f_j(|t_{fc}-t_j|)$ for $j=1, \dots, N$ \footnote{This situation occurs when a low-resolution sensor observes a large region of interest (ROI) and multiple high-resolution sensors observe disjoint regions within the ROI.}. The correlation of the FC's observation with the $j^{th}$ sensor's observation when $t_{fc} = t_j$, i.e., when the $j^{th}$ sensor's observation is \textit{synchronous} with the FC's observation is $\rho_j$.  


We define $\mathbf{\Sigma}_{eq}$ as the covariance matrix that is formed when $t_{fc} = t_1 = \dots = t_{N}$, i.e., when all the sensors transmit \textit{synchronously}, and $\mathbf{r}_{eq}$ is the vector of pair-wise correlations of the fusion center with the observations at the other sensors. 

\begin{lem}
If $\mathbf{\Sigma}_{eq} \succ 0$, then $\mathbf{\Sigma}(t) \succ 0 \thinspace \forall \thinspace t_1, \dots, t_{N}, t_{fc} \in \mathbb{R}^{+}$. 
\label{lem:corr_sum_constraint}
\end{lem}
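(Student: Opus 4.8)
The plan is to reduce this matrix positive-definiteness statement to a single scalar inequality by means of the Schur complement, and then to invoke Assumption \ref{as:dec_corr_fns}. Since $\sigma^2 > 0$, it suffices to establish positive definiteness of the bracketed matrix in (\ref{eq:cov_structure_mat}). Its leading $N \times N$ block is the identity $\mathbf{I}_N \succ 0$, so by the Schur complement criterion the full $(N+1)\times(N+1)$ matrix is positive definite if and only if the Schur complement of $\mathbf{I}_N$ is positive, i.e. if and only if $1 - \mathbf{r}_{fc}^T(\mathbf{t})\,\mathbf{I}_N^{-1}\,\mathbf{r}_{fc}(\mathbf{t}) = 1 - \sum_{j=1}^N r_j^2(\mathbf{t}) > 0$.

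First I would apply this same criterion to $\mathbf{\Sigma}_{eq}$. In the synchronous case $t_{fc} = t_1 = \dots = t_N$ we have $f_j(0) = 1$, so $\mathbf{r}_{eq} = [\rho_1, \dots, \rho_N]^T$, and the Schur complement argument shows that the hypothesis $\mathbf{\Sigma}_{eq} \succ 0$ is equivalent to the scalar condition $\sum_{j=1}^N \rho_j^2 < 1$. This converts the assumption of the lemma into a clean constraint on the synchronous correlations.

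Next I would bound the general correlation vector against the synchronous one termwise. By Assumption \ref{as:dec_corr_fns} each $f_j$ is decreasing with $f_j(0) = 1$, hence $0 < f_j(|t_{fc}-t_j|) \le 1$ and therefore $r_j^2(\mathbf{t}) = \rho_j^2 f_j^2(|t_{fc}-t_j|) \le \rho_j^2$ for every choice of sampling times. Summing over $j$ yields $\sum_{j=1}^N r_j^2(\mathbf{t}) \le \sum_{j=1}^N \rho_j^2 < 1$, where the final inequality is exactly the condition extracted from $\mathbf{\Sigma}_{eq} \succ 0$. The Schur complement of $\mathbf{I}_N$ in (\ref{eq:cov_structure_mat}) is thus strictly positive for all $t_1,\dots,t_N,t_{fc}$, which gives $\mathbf{\Sigma}(\mathbf{t}) \succ 0$ and completes the argument.

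The crux of the proof is the first reduction: because the leading block is the identity, the Schur complement collapses to the squared Euclidean norm of the correlation vector, turning an awkward statement about a matrix-valued function of $N+1$ time variables into a monotonicity statement about a sum of squares. Once that reduction is in place the remaining work is routine, since Assumption \ref{as:dec_corr_fns} controls each summand individually. The only point requiring a little care is to bound $r_j^2$ rather than $r_j$ itself, so that the possibly-signed correlations $\rho_j$ cause no difficulty; working with the squared form makes the termwise comparison with the synchronous case valid irrespective of the sign of each $\rho_j$.
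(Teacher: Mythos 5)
Your proof is correct and follows essentially the same route as the paper's: both reduce positive definiteness to the scalar Schur-complement condition $1-\sum_{j}r_j^2(\mathbf{t})>0$, translate the hypothesis $\mathbf{\Sigma}_{eq}\succ 0$ into $\sum_j \rho_j^2<1$, and conclude via the termwise bound $f_j^2\le 1$ from Assumption \ref{as:dec_corr_fns}. Your writeup is in fact slightly more careful than the paper's (noting explicitly that the leading block is the identity and that squaring handles the signs of the $\rho_j$), but there is no substantive difference in approach.
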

\textit{Proof: } If $\mathbf{\Sigma}_{eq} \succ 0$, we have $1-\mathbf{r}^{T}_{eq}\mathbf{r}_{eq} > 0$ through the Schur complement lemma. This is equivalent to $||\mathbf{r}_{eq}||_{2}^{2}<1$, which is further equivalent to $\sum_{j=1}^{N} \rho^2_{j} < 1$. For the condition $\mathbf{\Sigma}(\mathbf{t}) \succ 0$ to be satisfied, it is required that $\sum_{j=1}^{N} \rho^2_{j} f^{2}_{j} < 1$, according to the Schur complement lemma. This is satisfied due to $0<f_{j} \leq 1$ from Assumption \ref{as:dec_corr_fns}, which leads to $\sum_{j=1}^{N} \rho^2_{j} < 1$. This completes the proof. $\blacksquare$
\vskip 0.1cm
Since it has been established that $\mathbf{\Sigma}(\mathbf{t}) \succ 0$, we can expand the KL divergence term in (\ref{eq:exp_LLR_KLD}) using block matrix inversion. For a block matrix of the form $\mathbf{\Sigma}(\mathbf{t})$,  
\begin{equation}
\mathbf{\Sigma}^{-1}(\mathbf{t}) = \frac{1}{\sigma^2} 
\begin{bmatrix}
\mathbf{I}_{N} + \frac{1}{d_{fc}(\mathbf{t})} \mathbf{r}_{fc}(\mathbf{t}) \mathbf{r}^{T}_{fc}(\mathbf{t}) & \frac{-1}{d_{fc}(\mathbf{t})}\mathbf{r}_{fc}(\mathbf{t}) \\[1em]
\frac{-1}{d_{fc}(\mathbf{t})}\mathbf{r}^{T}_{fc}(\mathbf{t}) & \frac{1}{d_{fc}(\mathbf{t})}
\end{bmatrix}
\label{eq:inv_cov}
\end{equation}
where $d_{fc}(\mathbf{t}) = 1 - \mathbf{r}^{T}_{fc}(\mathbf{t})\mathbf{r}_{fc}(\mathbf{t})$. 


\section{Analysis of expected stopping time of SPRT}
The KL divergence term in (\ref{eq:exp_LLR_KLD}) $\frac{1}{2} \mathbf{\tilde{s}}^{\text{T}} \mathbf{\Sigma}^{-1}(\mathbf{t}) \mathbf{\tilde{s}}$ can be simplified as 
\begin{align}
  \frac{1}{2} \mathbf{\tilde{s}}^{\text{T}} \mathbf{\Sigma}^{-1}(\mathbf{t}) \mathbf{\tilde{s}} = \frac{||\mathbf{s}_{N}||^2}{2\sigma^2 } + \frac{1}{2\sigma^2} \frac{(\sum_{j=1}^{N}s_j r_j(\mathbf{t})-s_{fc})^2}{1-\sum_{i=1}^{N} r^{2}_{j}(\mathbf{t})}.
  \label{eq:KL_div_expansion}
\end{align}
The first term on the right-hand side (RHS) of \eqref{eq:KL_div_expansion} is independent of the sampling times in $\mathbf{t}$. The second term contains the sampling times and it is this term that affects the expected stopping time of the sequential test. For notational convenience, we denote $\mathbf{\tilde{s}}^{\text{T}} \mathbf{\Sigma}^{-1}(\mathbf{t}) \mathbf{\tilde{s}}$ by $\mathbb{K}(\mathbf{t})$, $(\sum_{j=1}^{N}s_j r_j(\mathbf{t})-s_{fc})^2$ as $\mathbb{N}(\mathbf{t})$ and $1-\sum_{j=1}^{N} r^{2}_{j}(\mathbf{t})$ as $\mathbb{D}(\mathbf{t})$.

\subsection{Upper and lower bounds on KL divergence}
An upper bound on $\mathbb{K}(\mathbf{t})$ can be obtained according to,
\begin{align}
   \mathbb{K}(\mathbf{t}) < \frac{||\mathbf{s}||^2}{2\sigma^2 } + \frac{1}{2\sigma^2} \frac{\underset{\mathbf{t}}{\text{max}} \thinspace \thinspace {\mathbb{N}(\mathbf{t})}}{\underset{\mathbf{t}}{\text{min}} \thinspace \thinspace {\mathbb{D}(\mathbf{t})}}.
   \label{eq:KL_UB}
\end{align}
From the expression above, we first consider $\underset{t}{\text{min}} \thinspace \thinspace {\mathbb{D}(\mathbf{t})}$. This can be found as
\begin{align}
 \underset{\mathbf{t}}{\text{min}} \thinspace \thinspace {\mathbb{D}(\mathbf{t})} =  1 - ||\mathbf{r}_{eq}||^{2}_2,
 \label{eq:D_LB}
\end{align}
which is due to the fact that $f_j(|t_{fc}-t_j|)$ is a decreasing function of $|t_j - t_{fc}|$ and $f_j(0) = 1$, due to Assumption \ref{as:dec_corr_fns}.  

Next, we aim to obtain $\underset{\mathbf{t}}{\text{max}} \thinspace \thinspace {\mathbb{N}(\mathbf{t})}$ to complete the analysis of the upper bound on $\mathbb{K}(\mathbf{t})$. We first define two sets $\mathcal{A}^{+}$ and $\mathcal{A}^{-}$ as
\begin{equation}
    \mathcal{A}^{+} = \bigg \{j \thinspace : s_j \rho_j >0, \thinspace \forall \thinspace j = 1, \dots, N \bigg \}, \thinspace \text{and}
    \label{eq:pos_terms_KL_UB}
\end{equation}
\begin{equation}
      \mathcal{A}^{-} = \bigg \{j \thinspace : s_j \rho_j <0, \thinspace \forall \thinspace j = 1, \dots, N       \bigg \}.  
      \label{eq:neg_terms_KL_UB}
\end{equation}
Upon maximizing $\mathbb{N}(\mathbf{t})$, we have the following proposition.
\begin{proposition}
The upper bound of $\mathbb{K}(t)$, denoted by $\bar{\mathbb{B}}$, is  
\label{prop:kl_up_b}
\end{proposition}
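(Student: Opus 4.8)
The plan is to build directly on the decomposition already reached in \eqref{eq:KL_UB}, which upper-bounds $\mathbb{K}(\mathbf{t})$ by decoupling the numerator and denominator of the sampling-dependent term. This decoupling is legitimate because $\mathbb{N}(\mathbf{t}) = \big(\sum_{j} s_j r_j(\mathbf{t}) - s_{fc}\big)^2 \geq 0$ and, by Lemma~\ref{lem:corr_sum_constraint}, $\mathbb{D}(\mathbf{t}) = 1 - \sum_j r_j^2(\mathbf{t}) > 0$ for all $\mathbf{t}$, so that $\mathbb{N}(\mathbf{t})/\mathbb{D}(\mathbf{t}) \le (\max_{\mathbf{t}}\mathbb{N}(\mathbf{t}))/(\min_{\mathbf{t}}\mathbb{D}(\mathbf{t}))$. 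The denominator is already settled in \eqref{eq:D_LB}, where $\min_{\mathbf{t}}\mathbb{D}(\mathbf{t}) = 1 - \|\mathbf{r}_{eq}\|_2^2$ follows from Assumption~\ref{as:dec_corr_fns}. Hence the whole task reduces to evaluating $\max_{\mathbf{t}}\mathbb{N}(\mathbf{t})$ and then assembling $\bar{\mathbb{B}}$ from \eqref{eq:KL_UB}, \eqref{eq:D_LB}, and this maximum.

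To maximize $\mathbb{N}(\mathbf{t}) = g^2(\mathbf{t})$ with $g(\mathbf{t}) = \sum_{j=1}^{N} s_j \rho_j f_j(|t_{fc}-t_j|) - s_{fc}$, I would first note that for a fixed $t_{fc}$ each factor $f_j(|t_{fc}-t_j|)$ can be varied independently over its range $(0,1]$ by the free choice of $t_j$ in the window. Using Assumption~\ref{as:dec_corr_fns} ($f_j$ decreasing with $f_j(0)=1$), the term $s_j\rho_j f_j$ is monotone in $f_j$: for $s_j\rho_j>0$ it is largest at $f_j=1$ (synchronous, $t_j=t_{fc}$), while for $s_j\rho_j<0$ it is largest as $f_j\to 0$ (large separation). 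Tracking the sets $\mathcal{A}^{+}$ and $\mathcal{A}^{-}$ of \eqref{eq:pos_terms_KL_UB} and \eqref{eq:neg_terms_KL_UB}, the sum $\sum_j s_j\rho_j f_j$ therefore ranges over the interval whose endpoints are $\sum_{j\in\mathcal{A}^{+}} s_j\rho_j$ (positive contributions at full strength, negative ones vanishing) and $\sum_{j\in\mathcal{A}^{-}} s_j\rho_j$, so that $g(\mathbf{t})$ ranges over the interval with endpoints $g^{+} := \sum_{j\in\mathcal{A}^{+}} s_j\rho_j - s_{fc}$ and $g^{-} := \sum_{j\in\mathcal{A}^{-}} s_j\rho_j - s_{fc}$.

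The final step is a convexity argument: since $\mathbb{N}=g^2$ is convex in the scalar $g$, its supremum over $[g^{-}, g^{+}]$ is attained at an endpoint, giving
\begin{equation}
\max_{\mathbf{t}} \mathbb{N}(\mathbf{t}) = \max\left\{ \Big(\sum_{j\in\mathcal{A}^{+}} s_j\rho_j - s_{fc}\Big)^2, \ \Big(\sum_{j\in\mathcal{A}^{-}} s_j\rho_j - s_{fc}\Big)^2 \right\}. \nonumber
\end{equation}
Substituting this together with \eqref{eq:D_LB} into \eqref{eq:KL_UB} then yields the claimed $\bar{\mathbb{B}}$.

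I expect the main obstacle to be the bookkeeping that collapses the joint maximization over all $f_j$ into the two ``corner'' configurations indexed by $\mathcal{A}^{+}$ and $\mathcal{A}^{-}$: one must argue that the $f_j$ are \emph{simultaneously and independently} tunable (which rests on choosing each $t_j$ relative to the common $t_{fc}$) and that no interior configuration can beat both corners, which is exactly the monotonicity-plus-convexity content above. A secondary point worth flagging is that the extreme value $f_j\to 0$ is approached but not attained for strictly positive correlation functions, which is precisely why \eqref{eq:KL_UB} is a strict inequality; the supremum in the displayed maximum should therefore be read as a least upper bound rather than an attained value.
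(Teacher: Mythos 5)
Your overall architecture matches the paper's: the same numerator/denominator decoupling via \eqref{eq:KL_UB}, the same use of \eqref{eq:D_LB} for the denominator, the same sign-based partition into $\mathcal{A}^{+}$ and $\mathcal{A}^{-}$, and the same endpoint (convexity) argument reducing $\max_{\mathbf{t}}\mathbb{N}(\mathbf{t})$ to the larger of two squared ``corner'' values. The genuine gap is in how you extremize the individual factors $f_j$. You let each $f_j$ range over $(0,1]$ and, for the unfavorable terms, send $f_j\to 0$. But the sampling times are confined to the window, $0\le t_j\le\epsilon$ with $t_{fc}$ fixed in the same window, so $|t_{fc}-t_j|\le\delta_{fc}(\epsilon)=\max\{t_{fc},|t_{fc}-\epsilon|\}$ and hence $f_j(|t_{fc}-t_j|)\in[f_j(\delta_{fc}(\epsilon)),1]$, with the lower endpoint \emph{attained} at $t_j=0$ or $t_j=\epsilon$. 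The correct corners are therefore $\Psi^{+}(\epsilon)-s_{fc}$ and $\Psi^{-}(\epsilon)-s_{fc}$, which retain the factors $f_j(\delta_{fc}(\epsilon))$; your corners $\sum_{j\in\mathcal{A}^{+}}s_j\rho_j-s_{fc}$ and $\sum_{j\in\mathcal{A}^{-}}s_j\rho_j-s_{fc}$ simply delete the opposite-signed sums. Deleting a negative sum enlarges the upper corner and deleting a positive sum shrinks the lower corner, so your interval strictly contains the true range of $\mathbb{L}(\mathbf{t})$, and the resulting bound, while still a valid upper bound on $\mathbb{K}(\mathbf{t})$, is weaker than and not equal to the claimed $\bar{\mathbb{B}}$ in \eqref{eq:KL_UB_full_exp}. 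A telltale sign is that $\delta_{fc}(\epsilon)$, which appears explicitly in the proposition, never enters your derivation.

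Your closing remark compounds the issue: you attribute the strictness of \eqref{eq:KL_UB} to $f_j\to 0$ being approached but unattained, whereas in fact the extremal configuration under the window constraint is attained, and the quantity $f_j(\max\{t_{fc},|t_{fc}-\epsilon|\})$ is exactly what replaces your limit. The repair is local: in your monotonicity step, replace ``largest as $f_j\to 0$'' by ``largest at $f_j=f_j(\delta_{fc}(\epsilon))$'' for $j\in\mathcal{A}^{-}$ in the upper corner (and symmetrically for $j\in\mathcal{A}^{+}$ in the lower corner); the independence of the $t_j$'s and the convexity-at-endpoints step you already have then recover \eqref{eq:KL_UB_full_exp}.
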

\begin{align}
   \mathbb{\overline{B}} = & \frac{1}{2 \sigma^2\cdot d_{\text{eq}}}\textrm{max} \Bigg \{ { \bigg (\Psi^{+}(\epsilon) -s_{fc} \bigg )^2 } \nonumber, {\bigg( \Psi^{-}(\epsilon) -s_{fc} \bigg )^2 }\Bigg \} \nonumber\\
      &+ \frac{||\mathbf{s}||^2}{2\sigma^2}, 
     \label{eq:KL_UB_full_exp}
\end{align}
where the quantities $d_{\text{eq}} = (1 - ||\mathbf{r}_{\text{eq}}||^{2}_2)$, $\delta_{fc}(\epsilon) = \text{max} \thinspace \{t_{fc}, |t_{fc}-\epsilon| \}$, $\Psi^{+}(\epsilon)= \sum \limits_{j \in \mathcal{A}^{+}}s_j\rho_j +  \sum \limits_{j \in \mathcal{A}^{-}}s_j\rho_j f_{j}(\delta_{fc}(\epsilon) )$ and, $\Psi^{-}(\epsilon) = \sum \limits_{j \in \mathcal{A}^{+}} s_j\rho_j f_{j}(\delta_{fc}(\epsilon)) +  \sum \limits_{j \in \mathcal{A}^{-}}s_j\rho_j$.
\begin{proof}
Please refer to Appendix A.
\end{proof}
To establish the lower bound, we define the sets
\begin{equation}
    \mathcal{B}^{+} = \bigg \{j \thinspace : s_{fc} s_j \rho_j >0, \thinspace \forall \thinspace j = 1, \dots, N \bigg \}, \thinspace \text{and}
    \label{eq:pos_terms_KL_LB}
\end{equation}
\begin{equation}
      \hspace{-1.1cm}\mathcal{B}^{-} = \bigg \{j \thinspace : s_{fc} s_j \rho_j <0, \thinspace \forall \thinspace j = 1, \dots, N       \bigg \}.  
      \label{eq:neg_terms_KL_LB}
\end{equation}
Then, the following proposition establishes a lower bound on $\mathbb{K}(\mathbf{t})$:
\begin{proposition}
The lower bound of $\mathbb{K}(t)$ is given as
\begin{equation}
    \underline{\mathbb{B}} = \frac{||\mathbf{\tilde{s}}||^{4}/{(2 \sigma^2)}}{||\mathbf{\tilde{s}}||^{2} +  2 \cdot s_{fc} \cdot \bigg[ \sum_{j \in \mathcal{B}^{-}} s_j \rho_j f_j(\delta_{fc}(\epsilon)) + \sum_{j \in \mathcal{B}^{+}} s_j \rho_j) \bigg]}
    \label{eq:KL_LB_full_exp}
\end{equation}
\end{proposition}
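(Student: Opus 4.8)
The plan is to lower-bound the quadratic form directly, without carrying out the explicit block inversion that produced the upper bound, by exploiting the positive definiteness guaranteed by Lemma~\ref{lem:corr_sum_constraint}. Since $\mathbf{\Sigma}(\mathbf{t})\succ 0$ its symmetric square root exists, and applying the Cauchy--Schwarz inequality to the vectors $\mathbf{\Sigma}^{1/2}(\mathbf{t})\mathbf{\tilde{s}}$ and $\mathbf{\Sigma}^{-1/2}(\mathbf{t})\mathbf{\tilde{s}}$ yields
\begin{equation}
\|\mathbf{\tilde{s}}\|^{4} = \big(\mathbf{\tilde{s}}^{\text{T}}\mathbf{\tilde{s}}\big)^{2} \leq \big(\mathbf{\tilde{s}}^{\text{T}}\mathbf{\Sigma}(\mathbf{t})\mathbf{\tilde{s}}\big)\big(\mathbf{\tilde{s}}^{\text{T}}\mathbf{\Sigma}^{-1}(\mathbf{t})\mathbf{\tilde{s}}\big).
\end{equation}
Working with the KL divergence $\tfrac12\mathbf{\tilde{s}}^{\text{T}}\mathbf{\Sigma}^{-1}(\mathbf{t})\mathbf{\tilde{s}}$ of \eqref{eq:exp_LLR_KLD} (the quantity denoted $\mathbb{K}(\mathbf{t})$ in the bounds), this rearranges to $\mathbb{K}(\mathbf{t})\geq \|\mathbf{\tilde{s}}\|^{4}/\big(2\,\mathbf{\tilde{s}}^{\text{T}}\mathbf{\Sigma}(\mathbf{t})\mathbf{\tilde{s}}\big)$, which reduces the problem to \emph{upper} bounding the much simpler forward form $\mathbf{\tilde{s}}^{\text{T}}\mathbf{\Sigma}(\mathbf{t})\mathbf{\tilde{s}}$.

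The next step is to evaluate this forward form by block multiplication against the structure in \eqref{eq:cov_structure_mat}. Writing $\mathbf{\tilde{s}}=[\mathbf{s}^{\text{T}}\;\; s_{fc}]^{\text{T}}$ and using $\mathbf{r}_{fc}^{\text{T}}(\mathbf{t})\mathbf{s}=\sum_{j=1}^{N}s_j\rho_j f_j(|t_{fc}-t_j|)$ gives
\begin{equation}
\mathbf{\tilde{s}}^{\text{T}}\mathbf{\Sigma}(\mathbf{t})\mathbf{\tilde{s}} = \sigma^2\bigg[\|\mathbf{\tilde{s}}\|^{2} + 2\,s_{fc}\sum_{j=1}^{N}s_j\rho_j f_j(|t_{fc}-t_j|)\bigg],
\end{equation}
so the only dependence on the sensor sampling times sits in the cross term $\sum_{j}s_{fc}s_j\rho_j f_j(|t_{fc}-t_j|)$, which is strictly positive in total because $\mathbf{\Sigma}(\mathbf{t})\succ0$ forces the whole bracket to be positive.

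To obtain a bound uniform over $t_1,\dots,t_N$ (with $t_{fc}$ held fixed), I would maximize the denominator, since the Cauchy--Schwarz estimate is decreasing in $\mathbf{\tilde{s}}^{\text{T}}\mathbf{\Sigma}(\mathbf{t})\mathbf{\tilde{s}}$. The maximization is term-by-term and decouples across sensors because each $t_j$ is free. Partitioning the indices by the sign of the coefficient $s_{fc}s_j\rho_j$ into $\mathcal{B}^{+}$ and $\mathcal{B}^{-}$ as in \eqref{eq:pos_terms_KL_LB}--\eqref{eq:neg_terms_KL_LB}, Assumption~\ref{as:dec_corr_fns} (monotone $f_j$ with $f_j(0)=1$) together with the window constraint $t_j\in[0,\epsilon]$ gives: for $j\in\mathcal{B}^{+}$ the term is largest at $f_j=1$, while for $j\in\mathcal{B}^{-}$ it is largest (least negative) at the minimal $f_j$, attained at the maximal lag $\delta_{fc}(\epsilon)=\max\{t_{fc},|t_{fc}-\epsilon|\}$, i.e. $f_j(\delta_{fc}(\epsilon))$. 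Substituting the maximized denominator into the Cauchy--Schwarz estimate produces exactly $\underline{\mathbb{B}}$ in \eqref{eq:KL_LB_full_exp}.

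The main obstacle I anticipate is the bookkeeping in this last step rather than any deep difficulty: one must verify that the per-sensor worst cases are simultaneously achievable (they are, since the $t_j$ vary independently and $\delta_{fc}(\epsilon)$ depends only on the fixed $t_{fc}$ and $\epsilon$), that the maximized denominator remains positive so the division is legitimate, and that the sign convention in $\mathcal{B}^{\pm}$ is the correct one for \emph{maximizing} the cross term. The Cauchy--Schwarz step itself is routine once $\mathbf{\Sigma}(\mathbf{t})\succ0$ is invoked from Lemma~\ref{lem:corr_sum_constraint}, and care should be taken to carry the factor $\tfrac12$ from the KL divergence so that the numerator reads $\|\mathbf{\tilde{s}}\|^{4}/(2\sigma^2)$ as stated.
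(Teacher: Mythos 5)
Your proposal is correct and follows essentially the same route as the paper's Appendix B: the Cauchy--Schwarz inequality applied to $\mathbf{\Sigma}^{1/2}(\mathbf{t})\mathbf{\tilde{s}}$ and $\mathbf{\Sigma}^{-1/2}(\mathbf{t})\mathbf{\tilde{s}}$, reduction to maximizing the forward quadratic form $\mathbf{\tilde{s}}^{\text{T}}\mathbf{\Sigma}(\mathbf{t})\mathbf{\tilde{s}}$, and per-sensor maximization of the cross term via the $\mathcal{B}^{\pm}$ partition with $f_j=1$ on $\mathcal{B}^{+}$ and $f_j(\delta_{fc}(\epsilon))$ on $\mathcal{B}^{-}$. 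You even handle the $\sigma^2$ factor more carefully than the paper's intermediate display, so no changes are needed.
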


\begin{proof}
Please refer to Appendix B.
\end{proof}
We would like to remark that in both the upper and lower bounds obtained in (\ref{eq:KL_UB_full_exp}) and (\ref{eq:KL_LB_full_exp}), we assume that $t_{fc}$ is fixed. 

In order to numerically evaluate the maximum and minimum values of the KL-divergence term in (\ref{eq:KL_div_expansion}), we use the non-linear optimization solver \texttt{fmincon} in \MATLAB, which uses interior-point methods to obtain the solutions corresponding to the maximum and minimum. We use the \texttt{MultiStart} method, which initializes the solver at different starting points, so as to increase the likelihood of obtaining a globally optimal solution \footnote{We would like to remark that the solver does not always guarantee a globally optimal solution of (\ref{eq:KL_div_expansion}).}. 


In Fig. \ref{fig:bounds_min_max_num_sens}, the pairwise correlations between the sensor observations and the FC's observations are such that $|\rho_j| =\frac{1}{2N+1}$, where $N$ is the number of sensors, and the correlation functions $f_j(|t_{fc}-t_j|) = e^{-(t_{fc}-t_j)^2}, \forall j=1, \ldots, N$. The number of sensors with positive correlations and negative correlations are both equal to $\frac{N}{2}$. The sampling window is considered to be of duration $\epsilon = 1$ units of time and $t_{fc}$ is set at 0. \color{black} All the plots for the expected stopping time in Fig. \ref{fig:bounds_min_max_num_sens} are obtained when the true hypothesis is $\mathcal{H}_0$.

\begin{figure}[!h]
    \centering
    \includegraphics[clip, trim=0.5cm 5.5cm 0.5cm 5.5cm,width = 0.65 \linewidth]{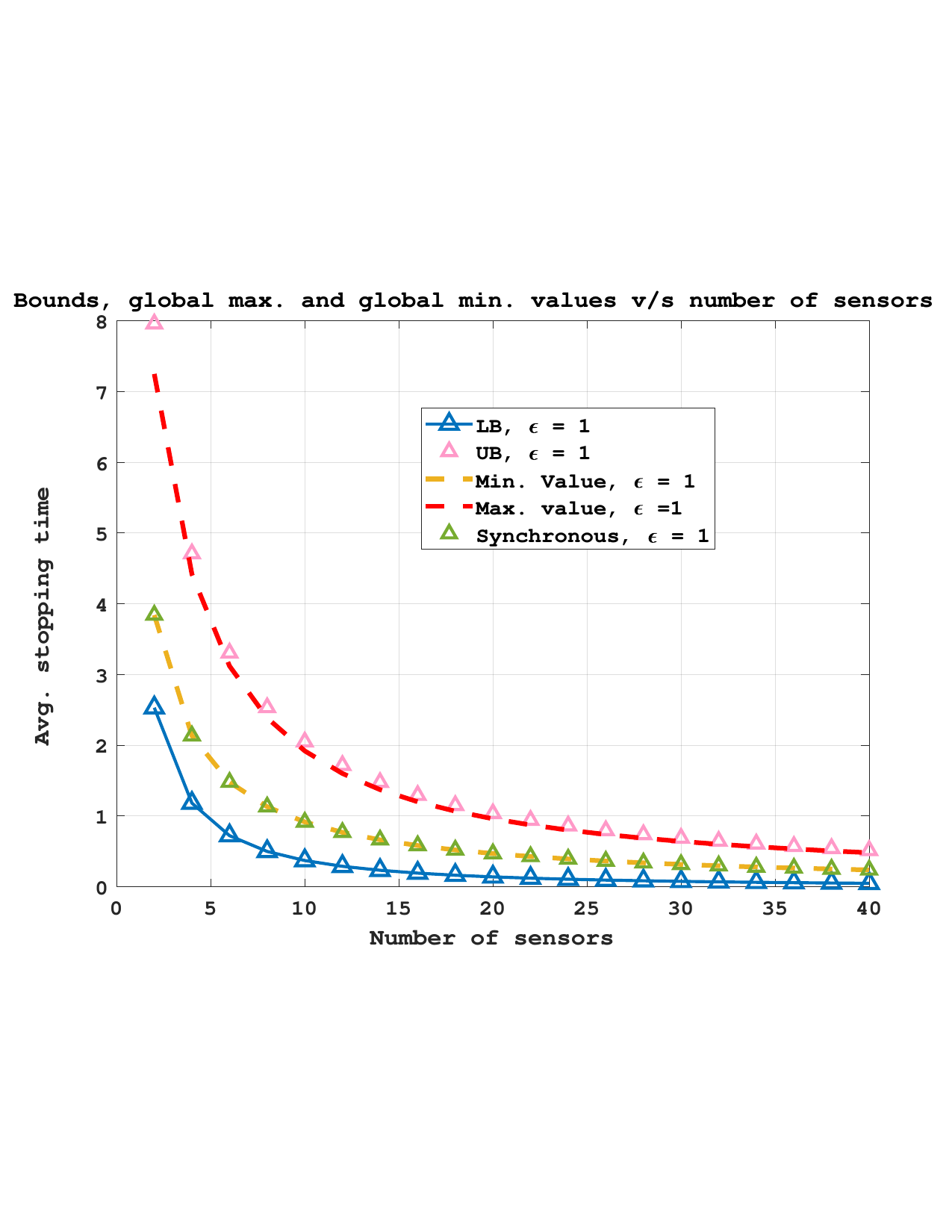}
    \caption{Expected stopping time versus the number of sensors for $s_{fc} = s_j = 0.5$ and $\sigma^2 = 1$; The figure is used to illustrate the nature of the upper and lower bounds and the maximum and minimum values of the stopping time as a function of the number of sensors when the prob. of detection $P_D = 0.92$ and the prob. of false-alarm $P_{FA} = 0.1$ for the SPRT.}
    \label{fig:bounds_min_max_num_sens}
\end{figure}

It can be seen from Fig. \ref{fig:bounds_min_max_num_sens} that the upper bound and maximum values are in proximity to each other. With an increasing number of sensors, the minimum value and the lower bound grow increasingly closer to one another. This demonstrates that the bounds are reasonably close to the numerical evaluations of the expected value for systems with a different number of sensors. The decrease in the expected stopping time as the number of sensors increases can be explained by noting that as the number of sensors increases, the term $||\mathbf{s}_{N}||^2$ in (\ref{eq:KL_div_expansion}) increases. Also, the absolute values of the correlations $|\rho_j| =\frac{1}{2N+1}$ decrease with $N$ so that the constraint on the sum of squares of correlation coefficients $\sum_{j=1}^{N} \rho_{j}^{2} < 1$ from Lemma \ref{lem:corr_sum_constraint} is satisfied. 

\begin{figure}
    \centering
\begin{subfigure}
    \centering
    \includegraphics[clip, trim=0.5cm 0.5cm 0.5cm 0.5cm,width =0.7\linewidth]{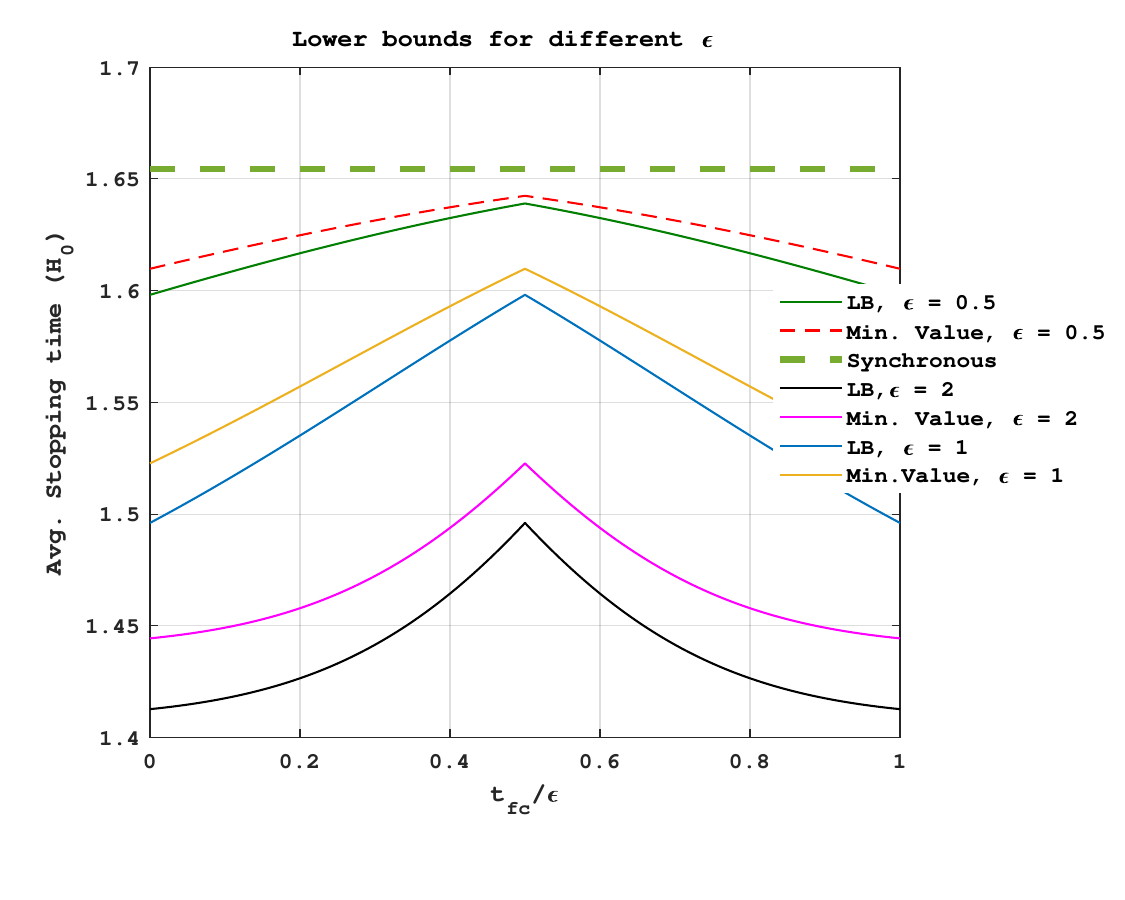}
    \label{subfig:overlap_LB_min}
\end{subfigure}
\hspace{\fill}
\begin{subfigure}
    \centering
    \includegraphics[clip, trim=0.5cm 6cm 0.5cm 6cm,width = 0.8\linewidth]{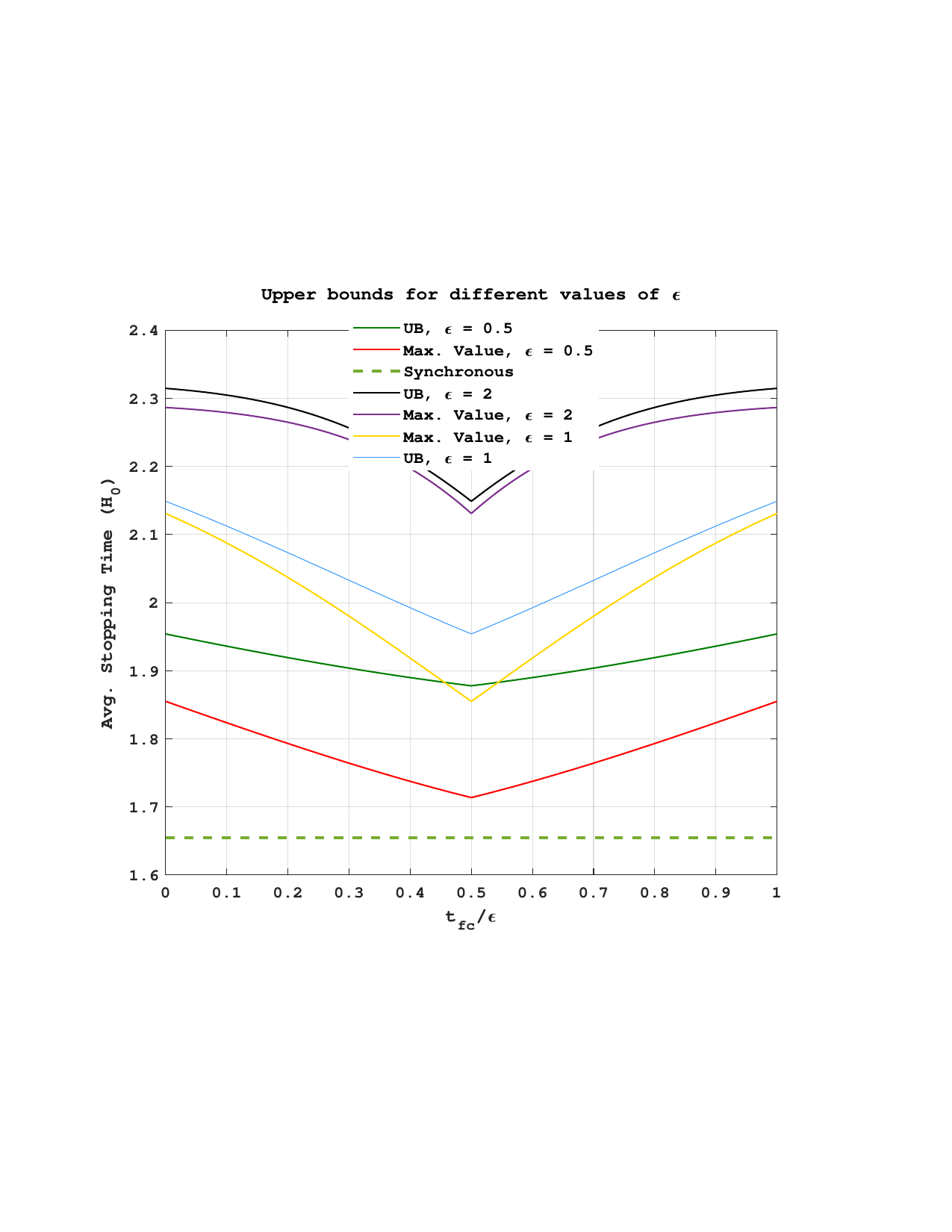}
    \label{subfig:overlap_UB_max}
\end{subfigure} 
    \caption{Expected stopping time versus the sampling time at the FC for $s_{fc} = s_i = 0.5$ and $\sigma^2 = 1$, when the prob. of detection $P_D = 0.92$ and the prob. of false-alarm $P_{FA} = 0.1$ for the SPRT.}
    \label{fig:bounds_two_figs}
\end{figure} 

In Fig. \ref{fig:bounds_two_figs}, we illustrate the variation of the bounds, global maximum, and global minimum values as a function of sampling times $t_{fc}$ for a system with 8 sensors. The observations of 4 sensors among the 8 sensors are positively correlated with FC's observations, with $\rho_j = 0.1$, and the other 4 sensors are negatively correlated with $\rho_j = -0.25$. The correlation functions are $f_j(|t_{fc}-t_j|) = e^{-(t_{fc}-t_j)^2}, \forall j=1, \ldots, 8$. The plots are normalized by considering $\frac{t_{fc}}{\epsilon}$ for the respective sampling windows $\epsilon = \{0.5, 1, 2\}$ to facilitate easier comparisons.  

From the curves in Fig. \ref{fig:bounds_two_figs}, we notice that the upper bound and maximum as a function of $t_{fc}$ attain the least value at $\frac{t_{fc}}{\epsilon} = 0.5$, for all values of $\epsilon$ considered in these figures. For the lower bound and minimum values, the maximum value is attained at $t_{fc} = \frac{\epsilon}{2}$. The curves for the upper bound and maximum increase in value as $\epsilon$ is larger (and decrease for the lower bound and minimum values). These observations can be explained by noting that when $\frac{t_{fc}}{\epsilon} = 0.5$, $\delta_{fc}(\epsilon) = 0.5$ is the smallest value and the optimal solutions $\hat{t}_{j}$ and $\check{t}_{j}$ are restricted to a smaller neighborhood around $t_{fc}$ ($|t_{fc}-\hat{t}_{j}|$, $|t_{fc}-\check{t}_{j}| \leq \delta_{fc}(\epsilon) = 0.5$). When the constraint set $0 \leq t_j \leq \epsilon$ is larger due to a larger value of $\epsilon$,  the optimal values may deviate further away from $t_{fc}$. Also, for a fixed value of $\epsilon$, if $t_{fc} = 0$ or $t_{fc}= \epsilon$, the optimal values are able to deviate further away from $t_{fc}$.

\section{Conclusion}
In this work, we presented a study of sampling in a distributed sequential hypothesis testing problem, when there are correlations between the observations at the sensors and the FC. We analyzed the performance of the system through the expected stopping time of the SPRT when the sensors were sampled at different time instants. We presented bounds on the expected stopping time of the sequential test and demonstrated the nature of these bounds and the nature of the maximum and minimum values of the expected stopping time as a function of the sampling time at the FC, using numerical results.

\bibliographystyle{IEEEtran}
\bibliography{sampling_SPRT.bib}

\section{Appendix A:\textbf{Proof of upper bound on KL divergence}}
We first define two sets $\mathcal{A}^{+}$ and $\mathcal{A}^{-}$ as
\begin{equation}
    \mathcal{A}^{+} = \bigg \{j \thinspace : s_j \rho_j >0, \thinspace \forall \thinspace j = 1, \dots, N \bigg \}, \thinspace \text{and}
\end{equation}
\begin{equation}
      \mathcal{A}^{-} = \bigg \{j \thinspace : s_j \rho_j <0, \thinspace \forall \thinspace j = 1, \dots, N       \bigg \}.  
\end{equation}
The maximization of ${\mathbb{N}(\mathbf{\mathbf{t}})}$ can be reformulated as: 
\begin{align}
   \underset{\mathbf{t}}{\text{max}} \thinspace \thinspace {\mathbb{N}(\mathbf{t})} &= \thinspace  {\text{max}} \thinspace \thinspace  \bigg \{ \bigg( \underset{\mathbf{t}}{\text{max}} \thinspace \thinspace \mathbb{L}(\mathbf{t}) \bigg)^2, \thinspace \thinspace \bigg(\underset{\mathbf{t}}{\text{min}} \thinspace \thinspace \mathbb{L}(\mathbf{t}) \bigg)^2 \thinspace \bigg \}.
   \label{eq:N_t_max}
\end{align}
where $\mathbb{L}(\mathbf{t}) = \thinspace \sum_{j \in \mathcal{A}^{+}} s_j r_j(\mathbf{t}) + \sum_{j \in \mathcal{A}^{-}} s_j r_j(\mathbf{t})-s_{fc}$. 
\vskip 0.2cm
We also define the following quantities, $\mathbb{L}^{\Uparrow}= \underset{\mathbf{t}}{\text{max}} \thinspace \thinspace \Big[\mathbb{L}(\mathbf{t}) \Big]$ and, 
$\mathbb{L}^{\Downarrow}= \underset{\mathbf{t}}{\text{min}} \thinspace \thinspace \Big[\mathbb{L}(\mathbf{t}) \Big]$. The following two scenarios are examined in order to determine $\mathbb{L}^{\Uparrow}(\mathbf{t})$ and $\mathbb{L}^{\Downarrow}(\mathbf{t})$:

 \vskip 0.2cm
\begin{itemize}
\item \underline{\textbf{Case 1}}: If $|\mathbb{L}^{\Uparrow}| > |\mathbb{L}^{\Downarrow}|$, then $\mathbb{L}^{\Uparrow} > 0$ and therefore, $\underset{\mathbf{t}}{\text{max}} \thinspace \thinspace {\mathbb{N}(\mathbf{t})} =  \bigg( \underset{\mathbf{t}}{\text{max}} \thinspace \thinspace \mathbb{L}(\mathbf{t}) \bigg)^2 = (\mathbb{L}^{\Uparrow})^2$. Then,  
\begin{align}
  \mathbb{L}^{\Uparrow}  &= \thinspace \underset{\mathbf{t}}{\text{max}} \bigg [\thinspace \sum_{j \in \mathcal{A}^{+}} s_j r_j(\mathbf{t}) + \sum_{j \in \mathcal{A}^{-}} s_j r_j(\mathbf{t})-s_{fc} \bigg ]  \nonumber \\
  &= \thinspace \thinspace \sum_{j \in \mathcal{A}^{+}} s_j \rho_j \bigg[ \underset{t_j}{\text{max}} \thinspace f_j(|t_{fc}-t_j|)  \bigg] -s_{fc} \nonumber\\
  & \quad + \sum_{j \in \mathcal{A}^{-}} s_j \rho_j \bigg[ \underset{t_j}{\text{min}} \thinspace f_j(|t_{fc}-t_j|)  \bigg] \nonumber\\
  &= \thinspace \thinspace \sum_{j \in \mathcal{A}^{+}} s_j \rho_j  +  \sum_{j \in \mathcal{A}^{-}} s_j \rho_j f_{j}(\text{max} \thinspace \{t_{fc}, |t_{fc}-\epsilon| \}) -s_{fc}. 
  \label{eq:N_u_t_sol}
\end{align}
\vskip 0.1cm
\item \underline{\textbf{Case 2}}: If $|\mathbb{L}^{\Uparrow}| < |\mathbb{L}^{\Downarrow}|$, then $\mathbb{L}^{\Downarrow} < 0$ and therefore, $\underset{\mathbf{t}}{\text{max}} \thinspace \thinspace {\mathbb{N}(\mathbf{t})} =  \bigg( \underset{\mathbf{t}}{\text{min}} \thinspace \thinspace \mathbb{L}(\mathbf{t}) \bigg)^2 = (\mathbb{L}^{\Downarrow})^2$. Then,
\begin{align}
  \mathbb{L}^{\Downarrow} &= \thinspace \underset{\mathbf{t}}{\text{min}} \bigg [\thinspace \sum_{j \in \mathcal{A}^{+}} s_j r_j(\mathbf{t}) + \sum_{j \in \mathcal{A}^{-}} s_j r_j(\mathbf{t})-s_{fc} \bigg ] \nonumber \\
  &= \thinspace \thinspace \sum_{j \in \mathcal{A}^{+}}s_j \rho_j \bigg[ \underset{t_j}{\text{min}} \thinspace f_j(|t_{fc}-t_j|)  \bigg]  \nonumber\\
  & \quad + \sum_{j \in \mathcal{A}^{-}}s_j \rho_j \bigg[ \underset{t_{fc}, t_j}{\text{max}} \thinspace f_j(|t_{fc}-t_i|)  \bigg]-s_{fc} \nonumber\\
  &= \thinspace \thinspace \sum_{j \in \mathcal{A}^{+}}s_j \rho_j f_{j}(\text{max} \thinspace \{t_{fc}, |t_{fc}-\epsilon| \}) \nonumber\\
  & \quad +  \sum_{j \in \mathcal{A}^{-}}s_j \rho_j -s_{fc}.
  \label{eq:N_l_t_sol}
\end{align}
\end{itemize}

When $\underset{\mathbf{t}}{\text{max}} \thinspace \thinspace {\mathbb{N}(\mathbf{t})} = (\mathbb{L}^{\Uparrow})^2$, the expression in (\ref{eq:N_u_t_sol}) can be explained by noting that in (\ref{eq:N_u_t_sol}), the summation consisting of terms from $\mathcal{A}^{+}$ (which are all positive) attains its maximum when $f_j(\cdot) = 1$. Similarly, the second summation contains only negative terms, all of which belong to $\mathcal{A}^{-}$ and attain its maximum value when $f_j(\cdot) = f_j(\underset{t_j}{\text{max}} \thinspace |t_{fc}-t_j|)$. When $t_{fc}$ is fixed, $\text{max} \thinspace |t_{fc}-t_j|$ occurs when either $t_j = 0$ or $\epsilon$, and since $f_j(\cdot)$ is a decreasing function from Assumption \ref{as:dec_corr_fns}, and minimizes the magnitude of the negative terms in $\mathcal{A}^{-}$. The expression in (\ref{eq:N_l_t_sol}) can also be obtained through a similar form of analysis. By using the expressions in (\ref{eq:N_l_t_sol}) and (\ref{eq:N_u_t_sol}), which are solutions to (\ref{eq:N_t_max}) in (\ref{eq:KL_UB}), an upper bound on $\mathbb{K}(\mathbf{t})$ is established. We denote the upper bound of $\mathbb{K}(\mathbf{t})$ by $\bar{\mathbb{B}}$, the quantity $\text{max} \thinspace \{t_{fc}, |t_{fc}-\epsilon| \}$ by $\delta_{fc}(\epsilon) $, and upon combining the expressions in (\ref{eq:N_t_max}), (\ref{eq:N_u_t_sol}) and (\ref{eq:N_l_t_sol}) the upper bound is given as 

\begin{align}
   \mathbb{\overline{B}} = & {\text{max}} \Bigg \{ \frac{ \bigg (\sum \limits_{j \in \mathcal{A}^{+}} s_j \rho_j +  \sum \limits_{j \in \mathcal{A}^{-}} s_j \rho_j f_{j}(\delta_{fc}(\epsilon) ) -s_{fc} \bigg )^2 }{2 \sigma^2\cdot(1 - ||\mathbf{r}_{fc}||^{2}_2)} \nonumber \\
      &, \frac{\bigg( \sum \limits_{j \in \mathcal{A}^{+}}s_j \rho_j f_{j}(\delta_{fc}(\epsilon)) +  \sum \limits_{j \in \mathcal{A}^{-}}s_j \rho_j -s_{fc} \bigg )^2 }{2 \sigma^2 \cdot (1 - ||\mathbf{r}_{fc}||^{2}_2)}\Bigg \} \nonumber\\
      &+ \frac{||\mathbf{s}||^2}{2\sigma^2}. \blacksquare
      \label{eq:UB_proof_eq_app}
\end{align}

We would like to remark that the result in (\ref{eq:UB_proof_eq_app}) can be obtained by also considering $\mathbb{P}(\mathbf{t}) = -\mathbb{L}(\mathbf{t})$ and proceeding in a similar manner as the analyses in the two cases that resulted in (\ref{eq:N_u_t_sol}) and (\ref{eq:N_l_t_sol}).

\section{Appendix B:\textbf{Proof of lower bound on KL divergence}}

A lower bound of $\mathbb{K}(t)$ can be derived by using the \textit{Cauchy-Schwarz} inequality on the dot product of $\mathbf{\Sigma}^{-\frac{1}{2}}(\mathbf{t}) \thinspace \mathbf{\tilde{s}}$ and $\mathbf{\Sigma}^{\frac{1}{2}}(\mathbf{t}) \thinspace \mathbf{\tilde{s}}$ as follows, 

\begin{align}
    &(\mathbf{\Sigma}^{-\frac{1}{2}}(\mathbf{t}) \mathbf{\tilde{s}})^{T} \cdot (\mathbf{\Sigma}^{\frac{1}{2}}(\mathbf{t}) \mathbf{\tilde{s}}) \thinspace \leq \thinspace ||\mathbf{\Sigma}^{-\frac{1}{2}}\mathbf{\tilde{s}}|| \thinspace \cdot \thinspace ||\mathbf{\Sigma}^{\frac{1}{2}}\mathbf{\tilde{s}}|| \nonumber \\
    \implies & ||\mathbf{\tilde{s}}||^{2} \thinspace \thinspace \leq \thinspace \thinspace ||\mathbf{\Sigma}^{-\frac{1}{2}}\mathbf{\tilde{s}}|| \thinspace \cdot \thinspace ||\mathbf{\Sigma}^{\frac{1}{2}}\mathbf{\tilde{s}}|| \nonumber\\
    \implies & ||\mathbf{\tilde{s}}||^{2} \thinspace \thinspace \leq \thinspace \thinspace (\mathbf{\tilde{s}}^{\text{T}} \mathbf{\Sigma}^{-1}(\mathbf{t}) \mathbf{\tilde{s}})^{\frac{1}{2}}\cdot(\mathbf{\tilde{s}}^{\text{T}} \mathbf{\Sigma}(\mathbf{t})\mathbf{\tilde{s}})^{\frac{1}{2}} \nonumber \\
    \implies & \mathbb{K}(\mathbf{t}) = \frac{1}{2} \mathbf{\tilde{s}}^{\text{T}} \mathbf{\Sigma}^{-1}(\mathbf{t}) \mathbf{\tilde{s}}  \thinspace \thinspace \geq \thinspace \thinspace \frac{1}{2} \thinspace\frac{||\mathbf{\tilde{s}}||^{4}}{\mathbf{\tilde{s}}^{\text{T}} \mathbf{\Sigma}(\mathbf{t})\mathbf{\tilde{s}}} \nonumber \\
    \implies & \mathbb{K}(\mathbf{t}) \thinspace \geq \thinspace \thinspace \frac{1}{2} \thinspace \frac{||\mathbf{\tilde{s}}||^{4}}{\underset{\mathbf{t}}{\text{max}} \thinspace \mathbf{\tilde{s}}^{\text{T}} \mathbf{\Sigma}(\mathbf{t})\mathbf{\tilde{s}}}
    \label{eq:Cauchy_LB}
\end{align}
where $\mathbf{\Sigma}^{1/2}(\mathbf{t})$ is the square root of $\mathbf{\Sigma}(\mathbf{t})$, whose existence is guaranteed due to the positive definiteness of $\mathbf{\Sigma}(\mathbf{t})$ from Lemma 1.  

We further establish the behavior of the term $\underset{t}{\text{max}} \thinspace \mathbf{\tilde{s}}^{\text{T}} \mathbf{\Sigma}(\mathbf{t})\mathbf{\tilde{s}}$ by expanding it as,

\begin{align}
    \underset{t}{\text{max}} \thinspace \mathbf{\tilde{s}}^{\text{T}} \mathbf{\Sigma}(\mathbf{t})\mathbf{\tilde{s}} = \thinspace \underset{t}{\text{max}} \thinspace \thinspace s^2_{fc} &+ \sum_{j=1}^{N} s^2_j  \nonumber\\
    &+ 2 \cdot \sum_{j=1}^{N} s_j s_{fc} r_j f_j(|t_{fc}-t_j|). 
    \label{eq:no_inv_qf}
\end{align}
We divide the terms $s_{fc}s_j \rho_j f_j(|t_{fc}-t_j|)$ into two sets $\mathcal{B}^{+}$ and $\mathcal{B}^{-}$ where

\begin{equation}
    \mathcal{B}^{+} = \bigg \{m_j \thinspace : m_j = s_{fc} s_j \rho_j >0, \thinspace \forall \thinspace j = 1, \dots, N       \bigg \}, \thinspace \text{and}
\end{equation}

\begin{equation}
      \mathcal{B}^{-} = \bigg \{m_j \thinspace : m_j = s_{fc} s_j \rho_j <0, \thinspace \forall \thinspace j = 1, \dots, N       \bigg \}.  
\end{equation}
We consider the summation term $2 \cdot \sum_{j=1}^{N} s_j s_{fc} \rho_j f_j(|t_{fc}-t_j|)$ in (\ref{eq:no_inv_qf}), which can be written as 
\begin{equation}
 2 \cdot \sum_{j \in \mathcal{B}^{+}} s_j s_{fc} \rho_j f_j(|t_{fc}-t_j|) + 2 \cdot \sum_{j \in \mathcal{B}^{-}} s_j s_{fc} \rho_j f_j(|t_{fc}-t_j|). \label{eq:split_eq_LB} 
\end{equation}

The expression in (\ref{eq:no_inv_qf}) is always positive due to the fact that $\mathbf{\Sigma}(\mathbf{t})$is positive definite for all $\mathbf{t}$ (Lemma 1). The two terms in (\ref{eq:split_eq_LB}) are of opposite signs due to which the elements corresponding to $\mathcal{B}^{+}$ can be maximized by setting $t_{fc} = t_j$ and similarly the elements corresponding to $\mathcal{B}^{-}$ can be maximized by setting $t_j$ such that $|t_{fc}-t_{j}|$ is maximized. 

An important remark about the lower bound is that is not trivial (negative), due to Lemma 1. This is because the quantity $\mathbf{\tilde{s}}^{\text{T}} \mathbf{\Sigma}(t) \mathbf{\tilde{s}}$ is positive for all values of $0 \leq t_1, t_2, \dots, t_N, t_{fc} \leq \epsilon$.  

The complete expression for the lower bound of $\mathbb{K}(t)$, using the expressions in (\ref{eq:Cauchy_LB}), (\ref{eq:no_inv_qf}) and (\ref{eq:split_eq_LB}) is 

\begin{equation}
    \underline{\mathbb{B}} = \frac{||\mathbf{\tilde{s}}||^{4}/{(2 \sigma^2)}}{||\mathbf{\tilde{s}}||^{2} +  2 \cdot s_{fc} \cdot \bigg[ \sum_{j \in \mathcal{B}^{-}} s_j \rho_j f_j(\delta_{fc}(\epsilon)) + \sum_{j \in \mathcal{B}^{+}} s_j \rho_j) \bigg]}, 
\end{equation}
where $\delta_{fc}(\epsilon) = \text{max}\{t_{fc}, |t_{fc}-\epsilon|\}$.

\end{document}